\newtheorem{theorem}{Theorem}[section]
\newtheorem{lemma}[theorem]{Lemma}
\newcommand{\ie}{{\em i.e.}\xspace}
\newcommand{\whp}{{\em w.h.p.}\xspace}
\def\algr{\textsc{A}\xspace}
\def\mstsinr{\textsc{MST-SINR}\xspace}
\begin{document}
\title{\bf A Fast Distributed Approximation Algorithm for Minimum Spanning Trees in the SINR Model}

\author[1]{Maleq Khan}
\author[1]{V.S. Anil Kumar}
\author[2]{Gopal Pandurangan}
\author[1]{Guanhong Pei}
\affil[1]{Virginia Bioinformatics Institute, Virginia Polytechnic Institute and State University, Blacksburg, VA, USA}
\affil[2]{Division of Mathematical Sciences, Nanyang Technological University, Singapore and Department of Computer Science, Brown University, Providence, RI, USA}
\date{}

\maketitle
\thispagestyle{empty}

\begin{abstract}

A fundamental problem in wireless networks is the
\emph{minimum spanning tree} (MST) problem:
given a set $V$ of wireless nodes, compute a spanning tree $T$,
so that the total cost of $T$ is minimized.
In recent years, there has been a lot of
interest in the physical interference model based on SINR constraints.
Distributed algorithms are especially challenging in the SINR model, because
of the non-locality of the model.

In this paper, we develop a fast  distributed  approximation algorithm for MST
construction in an SINR based distributed computing model. For an $n$-node network,
our algorithm's running time is $O(D\log{n}+\mu\log{n})$ and produces a spanning
tree whose cost is within $O(\log n)$ times the optimal (MST cost),
where $D$ denotes the diameter of the disk graph obtained by using the
maximum possible transmission range, and $\mu=\log{\frac{d_{max}}{d_{min}}}$ denotes the ``distance diversity''
w.r.t. the largest and smallest distances between two nodes.
(When $\frac{d_{max}}{d_{min}}$ is $n$-polynomial, $\mu = O(\log n)$.)
Our algorithm's running time is essentially optimal (upto a logarithmic factor), since computing {\em any}
spanning tree takes $\Omega(D)$ time; thus our algorithm produces a low cost spanning tree in time only a logarithmic factor more than the  time to compute a spanning tree.
The distributed scheduling complexity of the spanning tree resulted from our algorithm is $O(\mu \log n)$.
Our algorithmic design techniques can be useful in designing efficient distributed algorithms for related  ``global" problems in wireless networks in the SINR model.
\end{abstract}


\section{Introduction}
\label{sec:intro}

Emerging networking technologies such as ad hoc wireless and sensor networks  operate under inherent  resource constraints such as  power,  bandwidth etc. A distributed algorithm which
exchanges a large number of messages and takes a lot of time can consume a relatively
large amount of resources, and is not very suitable in a resource-constrained
network.
Also, the topology of these networks can  change dynamically.  Communication cost and
running time is especially crucial in  a  dynamic setting.
 Hence it becomes necessary  to design efficient distributed algorithms for various network optimization problems  that have
low communication  and time complexity, even possibly at the cost of a reduced  quality of solution.
(For example, there is not much point in having an optimal algorithm if  it takes too much time, since the topology could have changed by that time.)
For this reason, in the distributed context,  such algorithms are motivated even for network optimization problems that are not NP-hard, e.g., minimum spanning tree (MST), shortest paths (see e.g.,   \cite{elkin-survey}). However, much of the theory of distributed approximation
for various fundamental problems such as MST have been developed in the context of wired networks,
and not for wireless networks under a realistic interference model.

In this paper we focus on one of the most fundamental distributed computing problems in wireless networks, namely the
\emph{Minimum Spanning Tree (MST)} problem. It is a recurring sub-problem in many
network and protocol design problems, and there has been a lot of work on
 distributed algorithms for computing the MST. Computing an MST by a distributed algorithm is a fundamental task,
as the following distributed computation can be carried over the
best backbone of the communication graph.
Two important applications of an MST in wireless networks are
broadcasting and data aggregation. An MST can be used as broadcast
tree to minimize energy consumption since it minimizes
$\sum_{(u,v)\in T}{d^\alpha(u,v)}$. It was shown
in~\cite{Ambuhl-opt-energy,Wan-min-energy,clementi-min} that
broadcasting based on MST consumes energy within a constant factor
of the optimum.

Much of the traditional work on distributed algorithms \cite{lynch,peleg} has focused on
the CONGEST model of message passing, in which
a node can communicate with all its neighbors in one time step. This model is more suited for
wired networks, and the complexity of many fundamental problems (e.g., MST \cite{DistMst:Gallager}, shortest paths \cite{dist-bellman-ford}, etc.) is very well understood in the wired model.
This model does not capture \emph{interference}, an inherent aspect of wireless networks,
which causes collisions when ``close-by'' nodes transmit.
The Radio Broadcast Network (RBN) Model (see, e.g., \cite{alon:jcss91, gpx:podc05}) was developed
specifically to address such issues.  This model is defined on a unit disk communication
graph $G=(V,E)$, and the transmission from a node $u$ to its neighbor $v$ is successful,
provided no other neighbor $w\neq u, w\in N(v)$ transmits at the same time. Therefore, nodes
that can transmit simultaneously form some kind of independent sets, and this model has been
studied extensively over the last two decades. It is known that the RBN model is significantly
different from the CONGEST model \cite{peleg, gpx:podc05}, and its ``local'' structure has been used
in designing efficient algorithms for many fundamental graph problems.

Though the RBN model is much closer model of wireless transmission than the CONGEST model, it
still does not capture several crucial features \cite{gow:mobihoc07, moscibroda:infocom06}:
even if a node $v$ receives packets from two neighbors $u$ and
$w$, the transmissions are not necessarily lost --- this depends on the ratio of the
strength of the $u$'s signal to all other noise being above a threshold. On the other hand,
simultaneous transmissions by a number of ``well-separated'' nodes (which is possible in the
RBN model) might be infeasible in reality if the above-mentioned ratio of signal strengths is
violated at some receiver. Therefore, the complexity of wireless transmission is not captured
by the simple ``local'' constraints of the RBN model.  A different model, called the
\emph{Physical Interference model based on SINR (signal-to-interference \& noise ratio) constraints} (henceforth, referred to as the
SINR model) has been proposed to rectify the weaknesses of the RBN model
\cite{gow:mobihoc07, moscibroda:infocom06, moscibroda:mobihoc06}. The SINR model more accurately captures the
physical nature of wireless networks and has been used in a number of
recent studies (e.g., see the recent survey of Lotker and Peleg \cite{sinr-survey} and the references therein).
However, unlike traditional (wired and RBN) models, distributed algorithms are especially challenging to design and analyze in the SINR model, because of the non-local nature of the model.  In particular, to the best of our knowledge no prior distributed algorithms have been designed  for ``global" problems
in the SINR model. Algorithms in the RBN model do not translate to efficient
algorithms in the SINR model, whose non-locality makes it much harder.

Centralized algorithms for various fundamental
problems, such as independent sets, coloring, dominating set, spanning
tree have been developed in this model in recent years, e.g.,
\cite{Goussevskaia+:INFOCOM09, Wan+:WASA09, Halldorsson+:ICALP09}.  However,
distributed algorithms which satisfy the SINR constraints at each step
are only known for very few problems (e.g., \cite{Fanghanel+:ICALP09,Kesselheim+:DISC10, Halldorsson+:ICALP11}). Furthermore,  the distributed algorithms known are for ``local" problems such as independent set and coloring and not for ``global" problems.
Global
problem are those that require an algorithm to ``traverse" the entire
network. Classical ``global" problems include spanning tree,  minimum spanning
tree, shortest path etc. Network diameter is an inherent lower bound
for such problems.  We note that the known algorithms in the SINR model for
spanning tree problem and the connectivity problem
\cite{Moscibroda+:INFOCOM06,Moscibroda+:MobiHoc06, avin09} are in the  centralized setting.

In this paper, we describe the first distributed  algorithm in the SINR
model for approximate MST construction. The SINR constraints are satisfied
at all steps of the algorithm, which gives a spanning tree with cost
$O(\log{n})$, relative to the MST cost.  The running time is within
a polylogarithmic time of the lower bounds in the RBN model. We discuss
our results in more detail below.
\begin{asparaenum} [(1)]
\item
The running time of our algorithm is $O(D\log{n}+\mu\log{n})$, with high probability,
where $D$ is the diameter of the graph restricted by the maximum  range. This
is optimal up to a polylogarithmic factor, since computing {\em any} spanning tree
takes $\Omega(D)$ time. In particular, if the maximum power level is
unconstrained, the running time is $O(\mu\log{n})$.
The spanning tree produced by our algorithm
has a low distributed scheduling complexity (defined in Section~\ref{sec:sched-complexity}) of $O(\mu \log n)$, \ie, the transmission requests on the edges can be scheduled distributedly in $O(\mu \log n)$ time steps, for any orientation of the edges.
\item
Our main technical contribution is the adaptation of the
technique of ``Nearest Neighbor Trees'' (NNT) \cite{khan_tpds, khan_tcs, khan_distcomp} to the SINR model.
This technique results in spatial separation at each step, which
helps in ensuring the SINR constraints. Our algorithmic design technique can be useful in designing efficient distributed algorithms
for related  ``global" problems in wireless networks in the SINR model.
\end{asparaenum}

All prior distributed algorithms in the SINR model have been restricted to
scheduling kind of problems (e.g., independent sets, coloring and broadcasting).
Our result is the first distributed algorithm for a problem not in the above
class, and might be useful in other network design problems.



\section{Preliminaries and distributed primitives}
\label{sec:model}
\begin{table} [htbp]
\centering
\vspace{-0.05in}
\caption{Notation.}\label{tab:notation}
\fontsize{9.5}{12pt}\selectfont
\begin{tabular} {||c|c||c|c||} \hline
$V$ & set of nodes & $n$ & \#nodes  \\ \hline
$D$ & network diameter & $d(u,v)$ & dist. between $u,v$ \\ \hline
$\alpha$ & path-loss exponent & $\beta$ & SINR threshold \\ \hline
$N$ & background noise & $P$ & transmission power \\ \hline
$\mu$ & distance diversity & $r$ & transmission range \\ \hline
\end{tabular}
\vspace{-0.1in}
\end{table}
\normalsize

Let $V$ denote a set of tranceivers (henceforth, referred to as nodes) in the Euclidean plane, and  $d(u,v)$ the Euclidean distance between nodes $u,v$. Let $d_{min}$ and $d_{max}$ denote the smallest and the largest distances between any two nodes respectively. $d_{min}$ is a constant as a result of the dimension of a node as a wireless device. We normalize the distances, such that $d_{min} = 1$.
We use $D$ to denote the diameter of the disk graph obtained based on $V$ by using the
maximum possible transmission range. We say a node $u$ is in the ball $B(v, r)$ of node $v$ if and only if $d(u, v) \leq r$.
Let $P(u)$ denote the transmission power chosen by node $u$, with
a maximum power level possible, denoted by $P_{max}$. We also assume that
the nodes are capable of ``adaptive power control'', which means that they
can transmit at any power level in the range $[0, P_{max}]$.  We assume
the commonly used path loss models \cite{Goussevskaia+:MobiHoc07, chafekar:mobihoc07}, in which the transmission from $u$ to $v$ is possible only if:
	$\frac{P(u)}{d^{\alpha}(u,v)} \big/ N \geq \beta$,
where $\alpha>2$ is the ``path-loss exponent'', $\beta>1$ is the minimum SINR
required for successful reception, and $N$ is the background noise
(note that $\alpha, \beta$ and $N$ are all constants).
This inequality also defines under a certain transmission power $P$ the \emph{transmission range},
which is the threshold distance beyond which two nodes cannot communicate with each other, and which equals $\sqrt[\alpha]{P/(N \beta)}$.
To reduce notational clutter, we will say that the transmission range $r$
associated with a transmission power $P$ is $r=(P/c)^{1/\alpha}$, for
a constant $c$. Let $r_{max}=(P_{max}/c)^{1/\alpha}$ denote the maximum
transmission range of any node at the maximum power level.
W.l.o.g., we assume $r_{max} \leq d_{max}$ in our algorithms.

We say that a set $S$ of nodes in the plane form a ``constant density set''
w.r.t. range $r$ if there are $O(1)$ nodes within the ball $B(v,r)$ of any node $v$,
where $v$ is the center and $r$ is the radius.
A set $S'\subset S$ is said to be a \emph{constant density dominating set} for $S$
w.r.t. range $r$, if $S'$ is a constant density set, and for each $v\in S$,
there exists $u\in S'$ such that $v\in B(u, r)$. Given a set $S$ of nodes
and range $r$, we define $G_r(S)= \big(S, E=\{(u,v): u, v\in S, d(u,v)\leq r\}\big)$ to be
the graph induced by $S$ with range $r$.

\noindent
\textbf{Wireless interference}.
We use physical interference model based on geometric SINR constraints (henceforth
referred to as the SINR model), where
a set $L$ of links can make successful transmissions simultaneously if and
only if the following condition holds for each $l=(u,v) \in L$:
\begin{equation}
\label{eqn:sinr}
	\frac{\frac{P(u)}{d^{\alpha}(u,v)}} {\sum_{u' \in V' \setminus \{u\}} \frac{P(u')}{d^{\alpha}(u',v)} + N} \geq \beta,
\end{equation}
where $V'$ is the set of transmitting nodes. Such a set $L$ is said to be \emph{feasible} in the context.

\noindent
\textbf{\mstsinr: The minimum spanning tree problem under the SINR model}.
Given a set $V$ of wireless nodes, the goal is to find a spanning tree $T$,
such that the total cost of $T$ is minimized,
w.r.t. a cost function $cost(u,v) = d(u,v)$ for any pair of nodes $(u,v)$;
for a set $E$ of edges, we define $cost(T) = \sum_{(u,v) \in E} cost(u,v)$.
We focus on developing distributed approximation algorithms. We say an
algorithm gives a $\gamma$-approximation factor if it constructs a spanning tree $T$,
with $cost(T) \leq \gamma \cdot cost(MST)$, where $MST$ represents an optimum solution.
In the problem we study here, we only require the tree to be constructed
implicitly --- we assume we have a sink node $s$, and we
define a \emph{parent}, $par(v)$ for all nodes $v\neq s$,
such that the set of edges $\{(v, par(v)): v\in V, v\neq s\}$
form a spanning tree. Each node only needs to know the identity of its parent.
The goal of this paper is to design an algorithm for computing an approximate
MST in the distributed computing model based on SINR constraints (described
below); we do not require the transmissions (in either direction) on all the edges in the tree to be simultaneously
feasible in the SINR model.

\noindent
\textbf{Distributed computing model under the SINR model}.
Traditionally, distributed algorithms for wireless networks have been studied in the
radio broadcast model \cite{d2model, Moscibroda+:PODC05, Schneider+:PODC08} and its
variants. The SINR based computing model is relatively recent, and has not been
studied that extensively. Therefore, we summarize the main aspects and assumptions
underlying this model below.
\begin{asparaenum} [(1)]
\item
The network is synchronized and for simplicity we assume all time slots have the
same length.
\item
The graph $G_{r_{max}/c}(V)$ induced by $V$ with range $r_{max}/c$
is connected, i.e., the graph can be connected by using a transmission power level
of $P_{max}/c'$ which is slightly less than the maximum but within a constant
factor.
\item All nodes have a common estimate of $n$ within
a $n^c$ for some constant $c$.
\item
All nodes share a common estimate of $d_{min}=1$ and $d_{max}$, the minimum and maximum distances between nodes.
We use $\mu=\log{\frac{d_{max}}{d_{min}}}$ to denote the ``distance diversity'' (similar to the ``link diversity'' in \cite{Goussevskaia+:MobiHoc07}),
which is the number of classes of similar length edges
into which the set of all edges can be partitioned. It is common to assume that
$\mu=O(\log{n})$.
\item
As mentioned earlier, we assume nodes are equipped to be able to do adaptive power control, i.e.,
each node $v$ can transmit at any power level $P\in[0, P_{max}]$.
\end{asparaenum}

\subsection{Distributed primitives}
\label{sec:primitives}

We discuss two primitives that are needed in our MST algorithms for
local broadcasting and dominating sets. The results of
\cite{Goussevskaia+:DialMPOMC08, scheideler+:MobiHoc08} directly provide
us efficient implementations for these two problems.

\noindent
1.
\textbf{Local broadcasting}:
The \emph{local broadcasting range} $r_b$ is the distance up to which nodes intend to broadcast
their messages.
We say the local broadcasting from $S$ to $S'$ is successful if and only if for each node $u \in S$ with transmission power $P_b$,
all the nodes in $S'$ within $u$'s local broadcasting range $r_b$
receives the message from $u$.
We assume we have an algorithm,
\texttt{LocalBroadcast}$(S, S', r_b, P_b)$, which takes sets $S, S'$ of nodes, a
distance $r_b$ and a power level $P_b$ as input, and ensures that the local broadcasting from $S$ to $S'$ is successful.
A small modification of
the local broadcasting scheme of \cite{Goussevskaia+:DialMPOMC08} achieves
this step, which is described below.
\begin{lemma}
\label{lemma:ssma}
Given two sets $S, S'$ of nodes, a local broadcasting range $r_b$ and a power
level $P_b=c'r_b^{\alpha}$, for a constant $c'$,
there is a distributed algorithm \texttt{LocalBroadcast}$(S, S', r_b, P_b)$
that runs in $O(N(S, \gamma r_b) \log |S|)$ time where $N(S, \gamma r_b)$ is the maximum number of nodes in $S$ within a distance $\gamma r_b$ of any node in $S$ and $\gamma$ is a constant, such that:
(i) each node $v\in S'$ receives the message from all nodes in $S$ within
distance $r_b$ \whp, and
(ii) each node $v\in S'$ is able to selectively ignore messages from any node $u\in S'$
which is beyond distance $\gamma r_b$.
\end{lemma}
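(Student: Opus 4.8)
The plan is to adapt the ALOHA-style local broadcasting protocol of \cite{Goussevskaia+:DialMPOMC08} to the two-set setting and then bolt on a received-power test that gives property (ii). The protocol runs in $\Theta(N(S,\gamma r_b)\log|S|)$ synchronized slots; in each slot every node $u\in S$ transmits its message independently with probability $p=\Theta(1/N(S,\gamma r_b))$ at power $P_b=c'r_b^{\alpha}$, while every node in $S'$ only listens (each node can compute $p$ from its common polynomial estimate of $n$, hence of $|S|$, and its knowledge of $N(S,\gamma r_b)$). I would then fix a transmitter $u\in S$ and a receiver $v\in S'$ with $d(u,v)\le r_b$ and lower-bound the probability that $v$ decodes $u$ in a single slot, \ie{} the event $\mathcal{E}$ that (a) $u$ transmits, (b) no node of $S$ within distance $\gamma r_b$ of $v$ transmits, and (c) the aggregate interference at $v$ from the remaining (far) transmitters is at most a small constant fraction of the signal power $P_b/d^{\alpha}(u,v)\ge P_b/r_b^{\alpha}$, so that \eqref{eqn:sinr} holds at $v$ for the link $(u,v)$.

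Events (a) and (b) together have probability at least $p(1-p)^{N(S,\gamma r_b)}=\Theta(1)$, since by definition at most $N(S,\gamma r_b)$ nodes of $S$ lie in $B(v,\gamma r_b)$ and each transmits with probability $p$. For (c) I would partition the region outside $B(v,\gamma r_b)$ into annuli $A_j$ of radii roughly $[2^{j}\gamma r_b,\,2^{j+1}\gamma r_b)$, $j\ge 0$; each $A_j$ is covered by $O(4^{j})$ balls of radius $\gamma r_b$, hence contains $O(4^{j}N(S,\gamma r_b))$ nodes of $S$, so the expected number transmitting from $A_j$ is $O(4^{j})$ and their expected interference at $v$ is $O\!\big(4^{j}\cdot P_b/(2^{j}\gamma r_b)^{\alpha}\big)=O\!\big(P_b/(\gamma r_b)^{\alpha}\big)\cdot 2^{j(2-\alpha)}$. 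Summing over $j\ge 0$ gives a geometric series with ratio $2^{2-\alpha}<1$ because $\alpha>2$, so the total expected far-interference is $O\!\big(P_b/(\gamma r_b)^{\alpha}\big)$, which (together with the background noise $N$) can be made smaller than $\tfrac{1}{2\beta}\,P_b/r_b^{\alpha}$ by choosing $\gamma$ a sufficiently large constant. A Markov bound then gives (c) with constant probability, independent of (a)–(b), so $\Pr[\mathcal{E}]=\Omega(1)$. Running $\Theta(N(S,\gamma r_b)\log|S|)$ independent slots, a Chernoff bound and a union bound over all $O(|S|^{2})$ relevant transmitter–receiver pairs show that every $v\in S'$ receives the message of every $u\in S$ with $d(u,v)\le r_b$, \whp; this establishes (i) and the claimed running time.

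For (ii) I would exploit that all transmitters use the same \emph{known} power $P_b=c'r_b^{\alpha}$: when $v\in S'$ decodes a packet from $u$, it estimates the channel gain $g=P_b/d^{\alpha}(u,v)$ from $u$'s preamble (standard matched-filter channel estimation, unaffected by interference since it uses the known preamble), recovers $d(u,v)=(P_b/g)^{1/\alpha}$, and discards the message whenever $d(u,v)>\gamma r_b$ — equivalently, it keeps only packets whose estimated received signal strength is at least $c'r_b^{\alpha}/(\gamma r_b)^{\alpha}=c'/\gamma^{\alpha}$. This is a purely local, post-decoding test that changes neither the schedule nor the analysis of (i), so the two properties hold simultaneously.

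The main obstacle is part (c): the exclusion radius $\gamma r_b$ and the transmission probability $p$ must be tuned together, since shrinking the residual far-interference to a small constant fraction of the signal forces $\gamma$ large, while keeping event (b) at constant probability forces $p=\Theta(1/N(S,\gamma r_b))$ with the \emph{same} $\gamma$; verifying that one constant $\gamma$ serves both purposes (and absorbing the constant-factor slack from covering annuli by balls centered off $S$) is the technical heart of the argument, and it is exactly where the assumption $\alpha>2$ is used.
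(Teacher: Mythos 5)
Your proposal is correct and takes essentially the same route as the paper: the same random-access protocol of \cite{Goussevskaia+:DialMPOMC08} with transmission probability $\Theta(1/N(S,\gamma r_b))$, the same ring/annulus partition to bound the stochastic far interference for property (i) (which the paper's sketch merely cites rather than reproves), and a received-signal-strength threshold for property (ii) (the paper thresholds the \emph{total} received power against $P_b/(\gamma' r_b)^{\alpha}+N$ rather than an estimated per-sender gain, but the filtering idea is identical). Your write-up simply fills in the details that the paper defers to the cited reference.
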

\begin{proof}[Proof(sketch)]
We describe the protocol as follows.
Following \cite{Goussevskaia+:DialMPOMC08} we use random access, in which
each node in $S$ transmits at Power $P_b$ with prob. $\frac{1}{N(S, \gamma r_b)}$ (known to each node in $S$), and each
node in $S'$ senses the channel to receive messages.
Each time a node $v \in S'$ receives a message,
$v$ checks the total power received;
if that exceeds $\frac{P_b}{(\gamma' r_b)^{\alpha}}+N$, $v$ discards the received message.

Property (i) in Lemma~\ref{lemma:ssma} directly follows from the proofs in \cite{Goussevskaia+:DialMPOMC08}, by partitioning the space into rings and uppering bounding the stochastic interference.
Then Property (ii) follows from the condition we put on accepting a message based on $SP(v)$,
such that any message sent from a node beyond distance $\gamma r_b$ from $v$ will be ignored, where $\gamma$ is a constant.
\end{proof}

\noindent
2.
\textbf{Constant density dominating set}:
We assume an algorithm
\texttt{ConstDominatingSet}$(S, r_c)$ that takes as input a set $S$ of nodes and
a range $r_c \leq r_{max}$, and produces a constant density dominating set $S' \subset S$
corresponding to this range, such that for each node in $S$
there is a node in $S'$ which is with distance $r_c$,
and the density of the nodes in $S'$ is at most a constant.
By density, we mean the number of nodes in $S'$ within a range $r_c$
of any node in $S'$.
The algorithm in \cite{scheideler+:MobiHoc08}
serves this purpose, with the performance summarized below.
\begin{lemma}[\cite{scheideler+:MobiHoc08}]
Given a set $S$ of nodes and a range $r_c \leq r_{max}$, a constant density dominating set
for $S$ can be constructed in time $O(\log{n})$ \whp
under the SINR model.
\end{lemma}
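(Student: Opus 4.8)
The plan is to construct $S'$ by an iterative ``elect-and-cover'' procedure in the style of Luby's maximal independent set algorithm, using the local broadcasting primitive of Lemma~\ref{lemma:ssma} (instantiated with broadcasting range $r_b=\Theta(r_c)$ and power $P_b=c'r_c^{\alpha}$) as the communication layer. Initially every node of $S$ is \emph{active}. The procedure runs in rounds; in round $t$: (i) every active node draws a fresh uniformly random rank and uses \texttt{LocalBroadcast} to send it to all active nodes within distance $r_c$, exploiting the ``selective-ignore'' property of Lemma~\ref{lemma:ssma} so that a node keeps only ranks coming from nodes within distance $\gamma r_c$; (ii) an active node whose rank exceeds every rank it kept declares itself a \emph{dominator}, joins $S'$, and announces this with a second \texttt{LocalBroadcast} of range $r_c$; (iii) every active node that receives such an announcement becomes \emph{covered}, and covered nodes together with dominators leave the active set. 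We run $\Theta(\log n)$ rounds and then stop.

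Correctness has two parts. For \textbf{covering}, I would argue as in Luby's analysis: on the ``conflict graph'' $G_{r_c}$ induced on the current active set, the rank-maximal nodes form an independent set, each of them joins $S'$ and causes all of its $G_{r_c}$-neighbors to be covered, and a standard double-counting argument shows that the number of edges of the conflict graph shrinks by a constant factor in expectation per round; hence after $\Theta(\log n)$ rounds the conflict graph has no edges \whp, the surviving nodes are mutually at distance $>r_c$, they all join $S'$ in one final round, and every original node is then a dominator or covered --- the failure probability is pushed below $n^{-c}$ by a union bound over the $n$ nodes. For \textbf{constant density}, I claim any two dominators are at distance $>r_c$: if $u$ joined in round $t$ then, by part~(i) of Lemma~\ref{lemma:ssma}, every active node within $r_c$ of $u$ received $u$'s rank, so none of them was rank-maximal that round, and they all received $u$'s announcement and became covered, hence none can join in a later round either. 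A packing argument (disjoint disks of radius $r_c/2$ centered at the dominators inside the ball of radius $\tfrac{3}{2}r_c$ around any fixed dominator) then bounds the number of dominators within distance $r_c$ of any dominator by $O(1)$; together with covering, this is exactly the constant-density dominating set property w.r.t.\ range $r_c$.

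The delicate point --- and what I expect to be the main obstacle --- is the $O(\log n)$ time bound. Each round makes $O(1)$ calls to \texttt{LocalBroadcast}, and Lemma~\ref{lemma:ssma} charges $O\!\big(N(A,\gamma r_c)\log|A|\big)$ time per call, where $A$ is the current active set; for an arbitrary input $S$ the local density $N(S,\gamma r_c)$ can be $\omega(1)$, so a naive accounting gives $\omega(\log^2 n)$ rather than $O(\log n)$. One way to cope is to prepend $O(1)$ ``thinning'' rounds that cut every $\Theta(r_c)$-ball of active nodes down to constant size --- after which each \texttt{LocalBroadcast} costs $O(\log n)$ --- while arguing that the nodes removed during thinning remain dominated by the survivors. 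The route actually taken in \cite{scheideler+:MobiHoc08}, however, avoids the explicit rank exchange altogether: it resolves contention implicitly through physical carrier sensing together with geometrically varying transmission probabilities, so that a round detects a successful (hence, by the SINR inequality, spatially well-separated) transmitter in $O(1)$ slots irrespective of how many nodes contend, and --- crucially --- the contention probabilities are carried across rounds so that the \emph{total} number of slots devoted to contention over all $\Theta(\log n)$ rounds is $O(\log n)$ \whp. The spatial separation enforced by the SINR constraint on successful transmissions then plays the role of rank-maximality in the density argument above, and I would import this amortized analysis from \cite{scheideler+:MobiHoc08} to finish.
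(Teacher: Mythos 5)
The paper does not prove this lemma at all: it is imported verbatim from \cite{scheideler+:MobiHoc08} (note the attribution in the lemma header), and the algorithm \texttt{ConstDominatingSet} is treated as a black-box primitive. So there is no in-paper argument to match yours against; the only question is whether your self-contained sketch actually establishes the claim, and it does not. Your covering and constant-density arguments (Luby-style rank maximality, pairwise separation $>r_c$ of dominators, disk packing) are fine, but the $O(\log n)$ running time --- the entire content of the lemma --- has a genuine gap that you yourself identify and then do not close. Each of your $\Theta(\log n)$ rounds invokes \texttt{LocalBroadcast} on the current active set $A$, at cost $O(N(A,\gamma r_c)\log|A|)$ by Lemma~\ref{lemma:ssma}; for an arbitrary input $S$ the density $N(S,\gamma r_c)$ can be $\Theta(n)$, so the construction as written runs in time up to $O(n\log^2 n)$, not $O(\log n)$. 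The proposed fix --- ``prepend $O(1)$ thinning rounds that cut every $\Theta(r_c)$-ball down to constant size'' --- is circular: reducing every ball of active nodes to constant size while keeping the removed nodes dominated \emph{is} the constant-density dominating set problem, and it is not clear how to do it in $O(1)$ rounds of anything, let alone $O(\log n)$ total time, without already having the primitive you are trying to build. Your fallback, ``import the carrier-sensing, geometrically-varying-probability, amortized analysis of \cite{scheideler+:MobiHoc08},'' is a fair description of how that paper actually achieves $O(\log n)$ independently of the local density, but at that point you are no longer proving the lemma --- you are citing it, which is exactly what the paper does. In short: the elementary route fails on the time bound for dense instances, and the correct route is not reproduced, only pointed at.
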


\subsection{Nearest Neighbor Tree Scheme}
The algorithmic paradigm underlying our algorithm is the Nearest Neighbor Tree (NNT) scheme
\cite{khan_tpds, khan_tcs, khan_distcomp}.
The NNT scheme can be used to construct a spanning tree,
called the {\em Nearest Neighbor Tree (NNT)}, efficiently in a
distributed fashion. The  cost of the NNT can be shown  to be
within an $O(\log n)$ factor of the cost of the MST. The scheme
used to construct an NNT (henceforth called  {\em NNT scheme})
consists of the following two steps:
\begin{compactenum} [(1)]
\item each node first chooses a unique \emph{rank} from a
totally-ordered set; a ranking of the nodes corresponds to a
permutation of the nodes;
\item each node (except the one with the
highest rank) connects (via the {\em shortest path}) to the
\emph{nearest} node of higher rank.
\end{compactenum}

It can be shown that the NNT scheme
constructs a spanning subgraph in any weighted graph whose cost is
at most $O(\log n)$ times that of the MST, irrespective of how the
ranks are selected (as long as they are distinct)
\cite{khan_tpds,khan_tcs}. Note that some cycles can be introduced in
step 2, and hence to get a spanning tree we need to remove some
edges to break the cycles.
The main advantage of the NNT scheme is that each node,
individually, has the task of finding its nearest node of higher
rank to connect to, and hence no explicit coordination is needed
among the nodes.

However, despite the simplicity of the NNT scheme,
it is non-trivial to efficiently implement the scheme in an arbitrary weighted graph.
The work of \cite{khan_distcomp} showed how to efficiently implement the scheme in an arbitrary
weighted graph in the CONGEST (wired) model. This implementation was shown to have
a running time of $\tilde{O}(D(G) + L(G,w))$
where $L(G,w)$ is a parameter called the {\em local shortest path
diameter} and $D(G)$ is the (unweighted) diameter of the graph.  This distributed implementation cannot be directly used here,
due to the (additional) complication of having to obey SINR constraints when each node   tries to search
for the nearest node of higher rank to connect to.
Second, the local shortest path diameter
can be significantly larger than the diameter of the underlying graph. In particular, it can
be as large as $n$, in which case, the above
implementation does not give the time bound that we would like to show in this paper, i.e., close
to the diameter of the underlying  graph. Note that this is essentially the best possible,
since computing a spanning tree on an arbitrary graph takes diameter time.
To show this stronger bound, we exploit the
fact that the underlying graph has a geometric structure. The main idea
is to choose ranks in a particular way (in step (1) of the NNT scheme) that guarantees the
each node can find its nearest node of higher rank fast. This idea was first implemented in  the UDG-NNT (Unit disk graph-NNT) algorithm \cite{khan_tpds}. However, again this implementation does not  directly work
in the SINR model. We show that one has to implement both step (1) and step (2) of the NNT scheme
in a incremental and staged fashion, so that SINR constraints are obeyed and still many
nodes can progress simultaneously. We show that NNT technique results in spatial separation of ``active" nodes (i.e., nodes that need to communicate) in each step, and hence is amenable to obey SINR constraints.  The   details are in the next Section.

\section{Approximating the MST in the SINR model}
\label{sec:algo}

We now discuss the algorithm \texttt{MST-SINR}
for finding a low cost spanning tree.
We start with the main intuition for the algorithm.
The basic idea for satisfying SINR constraints
at each step is to ensure that the senders are ``spatially separated''.
In many distributed MST algorithms, e.g., the algorithms of \cite{DistMst:Gallager,garay-sublinear}, there is no
guarantee that senders are well separated. However, the
{\em Nearest Neighbor Tree (NNT)} scheme \cite{khan_tpds, khan_tcs, khan_distcomp}
has a nice property that nodes gradually increase their range, and the nodes
which are active in any round are sufficiently well separated. However, this
approach has only been studied in CONGEST model in \cite{khan_tpds, khan_tcs, khan_distcomp},
and we adapt it to the SINR model.
For illustration, first imagine that
the maximum power $P_{max}$ is ``unbounded'' so that the $d_{max}\leq r_{max}$.
The algorithm in this special case involves the following steps.
\begin{compactenum} [(1)]
\item
We have $\log{r_{max}} \leq \mu$ phases, ranging from $i=1,\ldots,\log{r_{max}}$.
\item
In the $i$th phase, a subset $S_i$ of nodes participate, and the edges
chosen so far form a forest rooted at nodes in $S_i$. The nodes in $S_i$
transmit at power level of $c \cdot d_i^{\alpha}$ for a constant $c$, where $d_i= 2^i$.
\item
In the $i$th phase, each node $v\in S_i$ runs the NNT scheme: $v$ connects to a ``close-by''
node in $S_i$ within distance $c'\cdot d_i$ of higher rank, if one exists, for
a constant $c'$. The nodes which
are not able to connect continue into phase $i+1$.
\end{compactenum}

We need to prove that the above steps can be implemented in the SINR model, and the
resulting implicit tree is a low cost tree, relative to the MST. One complication,
in contrast to the original NNT scheme, is that there is no way to ensure that a node
$v\in S_i$ connects to the ``closest'' node in $S_i$ of higher rank, within distance
$d_i$. Because of the SINR model, it is possible that a transmission from some
far away node could be received by $v$. We show that the probability of this event is
very low, so that with high probability, we get a tree that is ``close to'' the NNT,
leading to the logarithmic approximation bound for the cost of the tree.
We describe the above algorithm as subroutine
\texttt{NNT-SINR-BP} in Section \ref{sec:powercon}, with the modification that
$P_{max}$ is bounded, and instead of $\mu$ phases, we only have $\log{r_{max}}$
phases. The result of \texttt{NNT-SINR-BP} is a forest with the highest rank nodes
forming the roots.

In the general case, $P_{max}$ is bounded, so that $d_{max}>r_{max}$, and all we have
is that the graph induced by range $r_{max}$ is connected\footnote{Recall that in Section
\ref{sec:model}, we in fact assume something stronger: the graph induced by
a range of $r_{max}/c$ is connected for a constant $c$.}.
If we run the above
bottom-up algorithm with some maximum range $r_1$,
we would get a forest in which the roots are at least $c\cdot r_1$ apart, where
$c$ is a constant. In order to connect up the roots, we first use
a ``top-down'' approach where we choose a set $Dom$ of nodes so that (i) each node
$v\not\in Dom$ is within distance $r_1$ of some node in $Dom$, and (ii) for each node
$u\in Dom$, there are a ``small'' number of nodes within distance $r_1$. The idea is
that if we can construct a spanning tree on $Dom$ (in the SINR model), and can
ensure that each node in $V \setminus Dom$ has a higher rank neighbor in $Dom$, within distance
$r_1$, then the above bottom-up phase restricted to $V \setminus Dom$ will allow the
trees to be combined. Such a set $Dom$ is precisely a ``constant density dominating set'',
as defined and computed efficiently in \cite{scheideler+:MobiHoc08}.
The spanning tree on $Dom$ is constructed by adapting
the UDG-NNT algorithm of \cite{khan_tpds}, which starts with a node $s$ and spreads the
ranks from it. This adaptation is discussed below as subroutine \texttt{NNT-SINR-CD},
which also works in the SINR model because the nodes in $Dom$ are spatially well separated.

We first describe the two subroutines, and then discuss the main algorithm.

\subsection{Constructing forests with power constraints}
\label{sec:powercon}

\begin{algorithm}[htbp]
\caption{\texttt{NNT-SINR-BP}$(S, P_{max}, rank(\cdot))$\label{alg:nnt-sinr-bp}}
\SetNoFillComment
\SetKwComment{Comment}{$\#$ }{}
\SetKwComment{tcc}{$\#$ }{}
\SetKwInOut{InputP}{Input}
\SetKwInOut{OutputP}{Output}

\InputP{node set $S$, max power $P_{max}$, and value $rank(v)$ for each $v\in S$}
\OutputP{set $F$ of edges}

Each node $v \in S$ does the following:

$r_{max} \gets (P_{max}/c)^{1/\alpha}$, $i_{max} \gets \lfloor \log_2 r_{max} \rfloor$\;

initially $S_i \gets \emptyset, \forall i>1$;
\fontsize{8.5}{10.5pt}\selectfont
\Comment{$S_i$ denotes the set of active nodes in phase $i$}\normalsize
$S_1 \gets S$\;

\For(\fontsize{8.5}{10.5pt}\selectfont\tcc*[h]{in each phase}\normalsize){$i=1$ to $i_{max}$}{
            \fontsize{8.5}{10.5pt}\selectfont
    \Comment{$v$ broadcasts its rank to all nodes in $S_i$ within range $d_i$}\normalsize
    $d_i \gets 2^i$, $P_i \gets c' d_i^{\alpha}$; \fontsize{8.5}{10.5pt}\selectfont\Comment{$c'$ is a constant}\normalsize
    \If(\fontsize{8.5}{10.5pt}\selectfont\tcc*[h]{broadcast rank value}\normalsize){$v \in S_i$} {
        $v$ broadcasts its rank by participating in \texttt{LocalBroadcast}$(S_i, S_i, d_i, P_i)$\;
    }
    $S_v \gets$ set of nodes with ranks received by $v$\;
    $v' \gets$ the node with highest rank in $S_v$\;
    \uIf(\fontsize{8.5}{10.5pt}\selectfont\tcc*[h]{$v$ has highest rank locally: add $v$ to the active set $S_{i+1}$ for next phase}\normalsize){$v = v'$}{
      add $v$ to $S_{i+1}$\; 
    }
    \lElse{
      $par(v) \gets v'$; \fontsize{8.5}{10.5pt}\selectfont\Comment{$v$ is done}\normalsize
    }
}

$F \gets \{(v, par(v)): par(v) \neq \emptyset, \forall v \in S\}$\;
\end{algorithm}

For subroutine \texttt{NNT-SINR-BP}, we are given a set $S$,
a maximum power level $P_{max}$ (corresponding to a range $r_{max}$),
and a rank function $rank(\cdot)$, which
assigns unique ranks for all nodes in $S$. The goal is to construct
a forest, in which each node connects to a parent within range $r_{max}$.
As discussed earlier, \texttt{NNT-SINR-BP} uses the NNT approach of
\cite{khan_tpds, khan_tcs, khan_distcomp}, in which each node connects
to the nearest node of higher rank, which leads to a forest. However,
in order to be feasible in the SINR model, we need to do this in a careful
and staged manner, so that the set of transmitting nodes in each round form
a constant density set.

\begin{lemma}
\label{lemma:constant-ball}
At the beginning of each phase $i$, for each node $v \in S_i$,
there are at most a constant number of active nodes in the ball $B(v, d_i)$, \ie,
$\big|B(v, d_i) \bigcap S_i\big| = O(1)$. Further, for each node $v \in S_i$,there are no other active nodes in the ball $B(v, d_i / 2)$, \ie,
$B(v, d_i/2) \bigcap S_i = \{v\}$.
\end{lemma}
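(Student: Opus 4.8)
The plan is to reduce both statements to a single \emph{separation} property: for every phase $i$ and every pair of distinct nodes $u,w\in S_i$ (the active set at the start of phase $i$) we have $d(u,w)>d_i/2$, equivalently $B(v,d_i/2)\cap S_i=\{v\}$ for all $v\in S_i$. Granting this, the first assertion $|B(v,d_i)\cap S_i|=O(1)$ is immediate from a standard packing bound: the open balls of radius $d_i/4$ centered at the nodes of $S_i$ are pairwise disjoint, and those with centers in $B(v,d_i)$ are contained in $B(v,\frac{5}{4}d_i)$, so comparing areas bounds their number by a constant. Thus the real work is proving the separation property, which I would do by induction on $i$.

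For the base case $i=1$ we have $S_1=S$ and $d_1/2=1=d_{min}$, so the property is just the normalization assumption $d_{min}=1$ (using the harmless convention that distinct nodes are strictly more than $d_{min}$ apart, or that the ball is taken open in this one boundary case). For the inductive step, suppose $S_i$ is $(d_i/2)$-separated. Then an analogous packing argument (now with radius $\gamma d_i$) gives $N(S_i,\gamma d_i)=O(1)$, where $\gamma$ is the constant of Lemma~\ref{lemma:ssma}, so the call \texttt{LocalBroadcast}$(S_i,S_i,d_i,P_i)$ made in phase $i$ meets the precondition of Lemma~\ref{lemma:ssma} and, \whp, delivers to every $w\in S_i$ the rank of every node of $S_i$ within distance $d_i$ (property (i)). Now fix distinct $v,w\in S_{i+1}$ and suppose toward a contradiction that $d(v,w)\le d_i$; since ranks are distinct, assume w.l.o.g.\ that $rank(v)>rank(w)$. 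As $v,w\in S_i$ (because $S_{i+1}\subseteq S_i$) and $d(v,w)\le d_i$, node $w$ received $v$'s rank, so $v\in S_w$ and the locally highest-ranked node $w'$ selected by $w$ satisfies $rank(w')\ge rank(v)>rank(w)$; hence $w'\neq w$, the algorithm sets $par(w)\gets w'$, and $w$ is \emph{not} added to $S_{i+1}$, contradicting $w\in S_{i+1}$. Therefore any two distinct nodes of $S_{i+1}$ lie more than $d_i=d_{i+1}/2$ apart, which is the separation property for phase $i+1$, closing the induction. A union bound over the $O(\mu)$ phases and the $n$ nodes keeps the whole argument valid \whp

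The step that needs the most care is keeping the induction genuinely non-circular and checking the interference side conditions: the phase-$i$ separation hypothesis is exactly what guarantees $N(S_i,\gamma d_i)=O(1)$, which is in turn what Lemma~\ref{lemma:ssma} needs so that phase $i$'s local broadcast actually delivers the short-range rank messages consumed by the phase-$(i+1)$ separation argument. One should also note that in phase $i$ only nodes of $S_i$ transmit (a node fixes $par(\cdot)$ and goes silent the moment it leaves the active set), so indeed $S_w\subseteq S_i$ and the stochastic-interference analysis underlying Lemma~\ref{lemma:ssma} applies without change.
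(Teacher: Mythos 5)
Your proof is correct and follows essentially the same route as the paper's: an induction on the phase number establishing that distinct nodes of $S_{i+1}$ are more than $d_i=d_{i+1}/2$ apart (because the lower-ranked of any close pair would have heard the higher-ranked one and left the active set), combined with a packing argument for the $O(1)$ bound. Your version is somewhat more careful than the paper's in explicitly checking the $N(S_i,\gamma d_i)=O(1)$ precondition of Lemma~\ref{lemma:ssma} and in tracking the w.h.p.\ qualifier via a union bound over phases, but the underlying argument is the same.
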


\begin{proof}[Proof by induction]
Recall that $S_{i}$ is the set of active nodes at phase $i$.
For phase 1, $S_1 = S$ and $d_0 = 1$; obviously there can be at most 16 nodes in the ball.
Assume that for phase $i$, the statement is true.
Then, at the beginning of phase $i+1$, if $v$ is active, according to the algorithm,
all nodes in the range of $d_i$ of $v$ should all have a lower rank and thus are all inactive.
Therefore, the distance between any two nodes in $S_{i+1}$ is at least $d_i$.
By packing property, there can be at most 16 nodes in the ball $B(v, d_{i+1})$.
Therefore, the statement is true for every phase.
\end{proof}

\begin{lemma}
\label{lemma:connected-all}
In the end of the algorithm, the set $F$ forms a forest. Further,
the set $S_{i_{max}}$ forms the set of roots of $F$, and for any two
nodes $u,v\in S_{i_{max}}$, we have $d(u,v)\geq r_{max} / 2$.
\end{lemma}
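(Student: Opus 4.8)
The plan is to read all three assertions --- $F$ is a forest, its root set is $S_{i_{max}}$, and its roots are pairwise at distance $\ge r_{max}/2$ --- off the deterministic control flow of \texttt{NNT-SINR-BP}, using Lemma~\ref{lemma:constant-ball} (and, through it, Lemma~\ref{lemma:ssma}) only for the geometric step. For the forest claim, first note that each node $v$ has $par(v)$ assigned at most once, and only in the \textbf{else} branch --- hence only to the node $v'$ that is the top-ranked member of $v$'s received set $S_v$ with $rank(v') > rank(v)$; were $v$ itself top-ranked, it would instead be added to $S_{i+1}$ and never acquire a parent. Orient every edge of $F$ from $v$ to $par(v)$: then each vertex has out-degree at most one and $rank$ strictly increases along each arc. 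Since all out-degrees are $\le 1$, an undirected cycle of $F$ would have to contain a directed cycle, contradicting strict monotonicity of $rank$ along arcs; hence $F$ is acyclic, i.e., a forest.

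Next, to identify the roots: a vertex is a root of $F$ exactly when $par(\cdot)$ is never set for it, which happens iff it is top-ranked in its received set in every phase in which it is active --- equivalently, iff it is promoted in every phase and is still active after the last one. Call this surviving active set $R$; this is the set the statement denotes $S_{i_{max}}$. For any $v \in S$, iterating $par(\cdot)$ from $v$ produces a strictly rank-increasing (hence finite and repetition-free) chain that must terminate at a parentless vertex, i.e., at a member of $R$. So every connected component of $F$ contains exactly one node of $R$, and the components of $F$ are precisely the trees rooted at the nodes of $R$.

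For the separation bound, I would use the fact established inside the proof of Lemma~\ref{lemma:constant-ball}: any two distinct nodes that are both active at the start of phase $i+1$ lie at distance at least $d_i = 2^i$ --- for if $d(u,v) \le d_i$ with $u,v \in S_i$, then in phase $i$ the lower-ranked of the two receives the other's broadcast (Lemma~\ref{lemma:ssma}(i)) and attaches to it, so it is not promoted. Applying this one step beyond the last phase, any two members of $R$ are at distance at least $2^{i_{max}}$; since $i_{max} = \lfloor \log_2 r_{max} \rfloor$ gives $2^{i_{max}} \le r_{max} < 2^{i_{max}+1}$, we get $2^{i_{max}} > r_{max}/2$, hence $d(u,v) \ge r_{max}/2$ for all $u,v \in R$.

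I do not expect a genuine obstacle --- the lemma is essentially bookkeeping on top of the NNT construction and Lemma~\ref{lemma:constant-ball}. The two points needing a little care are phrasing the acyclicity argument correctly in the undirected setting (the out-degree-$\le 1$ observation handles it), and keeping the phase indices straight so that ``active after the last phase'' is exactly the object named $S_{i_{max}}$ in the statement and the constant $r_{max}/2$ emerges cleanly from the floor in $i_{max} = \lfloor \log_2 r_{max} \rfloor$.
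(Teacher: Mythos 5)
Your proof is correct and follows essentially the same route as the paper's (which compresses it into three sentences): acyclicity from the strict rank increase along parent pointers, the roots identified as exactly the nodes never assigned a parent, and the separation bound from the inductive spacing fact behind Lemma~\ref{lemma:constant-ball}. If anything you are more careful than the paper about the phase indexing --- applying the spacing bound to the survivors \emph{after} the final phase is what cleanly yields $2^{i_{max}} > r_{max}/2$, whereas reading Lemma~\ref{lemma:constant-ball} literally on the set active at the \emph{start} of the last phase would only give a $r_{max}/4$ separation.
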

\begin{proof}
Because of the NNT property, in which each node only connects to a higher
rank parent, there are no cycles. Next, by design, the nodes in $S_{i_{max}}$
do not connect to any other node, and form the roots of the forest $F$.
By Lemma \ref{lemma:constant-ball}, it follows that for any
$u,v\in S_{i_{max}}$, we have $d(u,v)\geq r_{max} / 2$.
\end{proof}


\begin{lemma}
\label{lemma:nnt-sinr-bp}
The cost of the forest $F$ is at most $O(\mu)$ times that of
of the minimum spanning forest of the nodes in $F$.
Further,
{\em \texttt{NNT-SINR-BP}} is feasible in the SINR model, with running time of
$O(\mu \log n)$, \whp
\end{lemma}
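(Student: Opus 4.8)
The plan is to prove the two assertions separately: the $O(\mu)$ cost bound by a scale-by-scale charging argument, and feasibility together with the $O(\mu\log n)$ running time by applying Lemma~\ref{lemma:ssma} once per phase. Write $cost(MSF)=\sum_j cost(MST(C_j))$, where the $C_j$ are the vertex sets of the connected components of $F$; this is what ``minimum spanning forest of the nodes in $F$'' means, and it is in any case at most the cost of an MST on the whole node set, so the bound transfers. For the cost bound, decompose $F$ by phase: let $F_i$ be the edges $(v,par(v))$ that are fixed in phase $i$, so $F=\bigcup_{i=1}^{i_{max}}F_i$ and the child endpoints of $F_i$ form exactly $W_i:=S_i\setminus S_{i+1}$. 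The key structural point is that every edge of $F_i$ lives at one geometric scale: a node only ever attaches to a higher-rank node that broadcast in the same phase, so both endpoints of an edge of $F_i$ lie in $S_i$, which by Lemma~\ref{lemma:constant-ball} is $(d_i/2)$-separated, whence the edge has length at least $d_i/2$; and property~(ii) of Lemma~\ref{lemma:ssma} lets a node discard anything beyond range $\gamma d_i$, so the edge has length at most $\gamma d_i$. Since $d_i=2^i$ and $i_{max}=\lfloor\log_2 r_{max}\rfloor\le\log_2 d_{max}=\mu$, the edges of $F$ fall into at most $\mu$ scale classes, and it suffices to show $cost(F_i)\le O(1)\cdot cost(MSF)$ with an absolute constant.

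Fix $i$. Inside a component $C_j$, the points of $W_i\cap C_j$ are pairwise at distance more than $d_i/2$, so every spanning tree of $W_i\cap C_j$ has cost at least $(|W_i\cap C_j|-1)\cdot d_i/2$, while the standard doubling/shortcutting argument gives $cost(MST(W_i\cap C_j))\le 2\,cost(MST(C_j))$ because $W_i\cap C_j\subseteq C_j$. The only slack is the additive ``$-1$'' per component, which I would absorb as follows: if $W_i\cap C_j\ne\emptyset$, pick $w\in W_i\cap C_j$; then $par(w)$ lies in the same component $C_j$ and in $S_i$, so $d(w,par(w))>d_i/2$ and therefore $cost(MST(C_j))>d_i/2$ as well. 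Adding the two inequalities yields $|W_i\cap C_j|\cdot d_i/2\le 3\,cost(MST(C_j))$; summing over $j$ gives $|W_i|\cdot d_i/2\le 3\,cost(MSF)$, hence $cost(F_i)\le\gamma d_i\,|W_i|\le 6\gamma\,cost(MSF)$, and summing over the at most $\mu$ phases gives $cost(F)=O(\mu)\,cost(MSF)$.

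For feasibility and running time, each phase $i$ consists of a single invocation \texttt{LocalBroadcast}$(S_i,S_i,d_i,P_i)$ with $P_i=c'd_i^\alpha$ --- exactly the power/range coupling that Lemma~\ref{lemma:ssma} requires with $r_b=d_i$ --- followed by purely local computation of $par(v)$ and of membership in $S_{i+1}$, so no communication occurs between phases. By Lemma~\ref{lemma:constant-ball} together with a packing argument (cover $B(v,\gamma d_i)$ by $O(\gamma^2)=O(1)$ balls of radius $d_i$, each holding $O(1)$ nodes of $S_i$) we get $N(S_i,\gamma d_i)=O(1)$, a constant the nodes can use; then Lemma~\ref{lemma:ssma} makes the invocation feasible in the SINR model and bounds its running time by $O(N(S_i,\gamma d_i)\log|S_i|)=O(\log n)$ rounds \whp There are $i_{max}\le\mu$ phases, so the total running time is $O(\mu\log n)$, and since $\mu=O(\log n)$ a union bound over the phases (each failing with probability $n^{-\Omega(1)}$) keeps the whole execution correct \whp

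I expect the main obstacle to be obtaining the $O(\mu)$ factor rather than the generic $O(\log n)$ NNT bound: one must pin down both a lower and an upper bound on the length of every phase-$i$ edge so that the $\le\mu$ scale classes are genuinely geometric, and one must dispose of the additive per-component term, for which the observation that a component meeting $W_i$ already contains an edge of length more than $d_i/2$ is exactly what makes the charging balance. The feasibility and timing part is comparatively routine once Lemma~\ref{lemma:constant-ball} supplies $O(1)$ local density at every scale.
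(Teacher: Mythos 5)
Your proof follows essentially the same route as the paper's: decompose $F$ into per-phase edge sets, use the $d_i/2$-separation of $S_i$ (Lemma~\ref{lemma:constant-ball}) to lower-bound the relevant MST cost by $\Omega(|S_i|\,d_i)$, upper-bound $cost(F_i)$ by $O(d_i\,|S_i|)$, and sum over the at most $\mu$ phases, with the identical running-time argument of one constant-density \texttt{LocalBroadcast} per phase costing $O(\log n)$ rounds. You are in fact somewhat more careful than the paper on a few constant-factor points --- the $\gamma d_i$ upper bound on edge lengths, the factor-$2$ comparison between the MST of a subset and of its superset, and the additive $-1$ per component --- but these refinements do not change the approach.
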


\begin{proof}
For $i<i_{max}$, let $F_i=\{(v,w) \in F:v\in S_i, w=par(v)\in S_{i+1}\}$
denote the set of edges in $F$ between nodes in sets $S_i$ and $S_{i+1}$.
Recall that $S_i$ denotes the set of nodes which remain active in the beginning of phase $i$.
Let $MST(S_i)$ denote a minimum spanning tree of only the nodes in $S_i$.
W.l.o.g., we assume $|S_i|>1$.
For any pair of nodes in $S_i$, the mutual distance is at least $d_i / 2$
(from Lemma \ref{lemma:constant-ball}), it follows that
$cost(MST(S_i)) \geq (|S_i|-1) d_i / 2 \geq |S_i| d_i / 4$.
Next, we have $cost(F_i) \leq d_i |F_i| \leq d_i |S_i|$, implying
$cost(F_i) \leq 4 cost(MST(S_i))$.
Since there are $O(\log r_{max})$ phases and $cost(MST(S_i)) \leq cost(MST)$,
we have
$cost(F) = \sum_{i} cost(F_i) \leq 4 (\log r_{max}) \sum_i cost(MST(S_i))$.
We have assumed $r_{max} \leq d_{max}$;
therefore, $\log r_{max} \leq \mu$, implying the statement of approximation ratio.
By Lemmas~\ref{lemma:ssma} and \ref{lemma:constant-ball},
each run of the local broadcast primitive finishes in $O(\log n)$ time, so does each phase of the algorithm.
The statement of running time follows.
\end{proof}

\subsection{Spanning trees for instances with constant density}
\label{sec:cd}

\begin{algorithm}[htbp]
\caption{\texttt{NNT-SINR-CD}$(S, s, r)$ \label{alg:nnt-sinr-cd}}
\SetNoFillComment
\SetKwComment{Comment}{$\#$ }{}
\SetKwComment{tcc}{$\#$ }{}
\SetKwInOut{InputP}{Input}
\SetKwInOut{OutputP}{Output}

\InputP{constant density node set $S$, sink $s$, range $r$}
\OutputP{spanning tree $T$ on $S$}

Each node $v \in S$ does the following:

\fontsize{8.5}{10.5pt}\selectfont
\Comment{in phase 0:}\normalsize
\uIf(\fontsize{8.5}{10.5pt}\selectfont\tcc*[h]{$s$ broadcasts to neighbors}\normalsize){$v = s$}{
    $v$ generates a large $rank(v)$, and broadcasts it by \texttt{LocalBroadcast}$(\{s\}, S, r, c r^{\alpha})$\;
}
\lElse{
    $v$ listens;
}

\fontsize{8.5}{10.5pt}\selectfont
\Comment{in each phase $i$:}\normalsize
\For{$i=1$ to $2 D(G_r(S))$}{
    $S_i \gets \emptyset$\;  \label{algoline:nntcd-start}
    \If{$v$ receives a message $m=(id', rank')$}{
        \If{$m$ is the first message $v$ ever receives}{
            $v$ generates $rank(v)$ randomly such that $rank(v)<rank'$\;
            $S_i \gets S_i \cup \{v\}$\;
            $v$ participates in \texttt{LocalBroadcast}$(S_i, S, r, c r^{\alpha})$\; \label{algoline:nntcd-broadcast}
        }
        add $id'$ to list $L(v)$, \textbf{if} $rank' > rank(v)$\;
    } \label{algoline:nntcd-end}
}

$par(v) \gets$ one of the nodes in $L(v)$\;

\Return the set $\{(v, par(v)): v\neq s\}$ of edges
\end{algorithm}


We now consider the problem of constructing a low cost spanning tree for
a constant density instance.
Subroutine \texttt{NNT-SINR-CD} takes such a set $S$, a sink $s$ and
a range $r$ as input. As defined in Section~\ref{sec:model},
$G_r(S)$ denotes the $S$-induced graph based on range $r$.
We adapt the UDG-NNT algorithm of \cite{khan_tpds} for producing a spanning tree.  Let
$D(G_r(S))$ denote the diameter of $G_r(S)$; we assume (an estimate of) $D(G_r(S))$ is known to
all the nodes. We note that in Line~\ref{algoline:nntcd-broadcast} of the algorithm, the local broadcasts
are run simultaneously for all nodes
$v$, which get the rank message for the first time. Further, each iteration of the
\textbf{for} loop in Lines~\ref{algoline:nntcd-start}-\ref{algoline:nntcd-end} involves $O(\log{n})$ time steps,
in which the local broadcast is run; nodes $v$ which do not have to send their
ranks in some iteration remain silent during these steps.

\begin{lemma}
\label{lemma:nnt-sinr-cd}
If $G_r(S)$ is connected, and $S$ is a constant density set with range $r$,
Algorithm {\em \texttt{NNT-SINR-CD}} produces under the SINR model a spanning tree $T_1$ on $S$ with
$cost(T_1)=O(OPT(G_r(S)))$ in time $O(D(G_r(S)) \log{n})$ with high probability,
where $OPT(G_r(S)) = cost(MST(G_r(S)))$.
\end{lemma}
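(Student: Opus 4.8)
The plan is to establish the three assertions of Lemma~\ref{lemma:nnt-sinr-cd} --- feasibility under the SINR model, the $O(D(G_r(S))\log n)$ running time, and the $O(OPT(G_r(S)))$ cost bound --- largely independently, using the \texttt{LocalBroadcast} primitive (Lemma~\ref{lemma:ssma}) as a black box together with an adaptation of the Nearest Neighbor Tree cost analysis of~\cite{khan_tpds, khan_tcs}. First I would record what the constant-density hypothesis buys: since any ball $B(v,\gamma r)$ contains $O(1)$ nodes of $S$, every invocation \texttt{LocalBroadcast}$(S_i,S,r,cr^{\alpha})$ made by the algorithm has $N(S,\gamma r)=O(1)$, hence is feasible in the SINR model and finishes in $O(\log|S|)=O(\log n)$ slots, \whp, by Lemma~\ref{lemma:ssma}(i); moreover, by Lemma~\ref{lemma:ssma}(ii) each node receives exactly the messages sent in that phase by nodes within distance $r$ and may discard those sent from beyond distance $\gamma r$. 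The algorithm consists of phase $0$ together with the $2D(G_r(S))$ iterations of the \textbf{for} loop, each phase being a single such primitive (the broadcasts of all newly awakened nodes are executed simultaneously inside one \texttt{LocalBroadcast} call) plus $O(1)$ local work, so the running time is $O(D(G_r(S))\log n)$; the output guarantees below hold conditioned on the $O(D(G_r(S)))=\mathrm{poly}(n)$ broadcasts succeeding, which happens \whp after boosting the success probability in Lemma~\ref{lemma:ssma} by a constant in the exponent and taking a union bound.

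Next I would show the output is a spanning tree rooted at $s$ via a ``rank wave'' induction. In phase $0$ only $s$ transmits, so by its end every node within distance $r$ of $s$ has received its first message; inductively, once every node at hop-distance $\le k$ from $s$ in $G_r(S)$ has received a first message --- and hence, in the single phase in which it is awakened, has generated a rank strictly smaller than that of the node it first heard from and has rebroadcast it --- the next phase awakens every as-yet-unawakened node at hop-distance $k+1$, since each such node lies within distance $r$ of a hop-$k$ node. As $G_r(S)$ is connected with diameter $D(G_r(S))$, after a number of phases at most $D(G_r(S))$, hence well within the $2D(G_r(S))$ iterations of the loop, every node has a rank, with $s$ holding the unique largest rank. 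Each $v\neq s$ adds to $L(v)$ every node it hears whose rank exceeds $rank(v)$, so in particular the node $v$ first heard from lies in $L(v)$; thus $L(v)\neq\emptyset$, $par(v)$ is well defined, and $rank(par(v))>rank(v)$. Following parent pointers therefore strictly increases the rank, so $\{(v,par(v)):v\neq s\}$ is acyclic and, being spanning with $|S|-1$ edges, is a spanning tree $T_1$ rooted at $s$. Combining the two directions of Lemma~\ref{lemma:ssma} also gives $\{u\in S: rank(u)>rank(v),\ d(u,v)\le r\}\subseteq L(v)\subseteq\{u\in S: rank(u)>rank(v),\ d(u,v)\le\gamma r\}$, \whp.

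For the cost bound, which I expect to be the main obstacle, the key observation is that the node $v$ first heard from already certifies a node of higher rank within distance $r$ of $v$, so the globally nearest node of higher rank than $v$ is at distance $\le r$ and therefore lies in $L(v)$; consequently, taking $par(v)$ to be a closest node of $L(v)$ (and in any case a node of $L(v)$ within a constant factor of $\min\{d(v,u): u\in S,\ rank(u)>rank(v)\}$) makes $T_1$ coincide, up to the constant factor $\gamma$, with the Nearest Neighbor Tree of $S$ under the generated ranks. I would then run the NNT charging argument of~\cite{khan_tpds, khan_tcs}: for each dyadic scale $\rho$ the set $A_\rho=\{v: d(v,par(v))\in(\rho/2,\rho]\}$ is $(\rho/2)$-separated, because for $u,v\in A_\rho$ with $rank(u)<rank(v)$ the fact that $par(u)$ is a nearest node of higher rank forces $d(u,v)\ge d(u,par(u))>\rho/2$, and one charges $\sum_{v\in A_\rho}d(v,par(v))$ against the edges of $MST(G_r(S))$. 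The delicate point, and the place where ``constant density at range $r$ together with connectivity of $G_r(S)$'' must do real work, is to argue that this charging is $O(1)$-to-one on each scale and that the per-scale contributions sum to $O(cost(MST(G_r(S))))$ rather than to $O(\log n)\cdot cost(MST(G_r(S)))$ --- precisely the improvement that fails for a general metric. A secondary nuisance is absorbing the $\gamma$-slack from property~(ii) of Lemma~\ref{lemma:ssma}, i.e.\ verifying that $par(v)$, though possibly not the exact nearest node of higher rank, is within a constant factor of it, so that the separation and charging estimates change only by constants.
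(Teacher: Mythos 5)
Your feasibility, running-time, and ``rank wave'' arguments are correct and match what the paper intends (the paper merely gestures at the UDG-NNT proof for these parts, so your extra detail is welcome). The genuine gap is in the cost bound, which you correctly identify as the crux and then do not prove: you explicitly defer the only step that matters (``the delicate point \ldots is to argue that this charging is $O(1)$-to-one on each scale and that the per-scale contributions sum to $O(cost(MST(G_r(S))))$''). Moreover, the dyadic NNT charging route you set up depends on $par(v)$ being (within a constant factor) the nearest higher-rank node, but the algorithm sets $par(v)$ to an \emph{arbitrary} element of $L(v)$, and $L(v)$ may contain any higher-rank node out to distance $\gamma r$; your claim that $A_\rho$ is $(\rho/2)$-separated therefore fails for the tree the algorithm actually outputs, and quietly changing the parent rule to ``closest element of $L(v)$'' is a modification of the algorithm, not a proof about it.

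The paper's argument sidesteps all of this because in the constant-density regime the bound is essentially trivial. By Lemma~\ref{lemma:ssma}(ii), every accepted message --- hence every edge of $T_1$ --- has length at most $\gamma r$ for a constant $\gamma$, so $cost(T_1)\leq \gamma r\,(|S|-1)$. In the other direction, constant density w.r.t.\ range $r$ yields a packing lower bound $cost(MST(G_r(S)))=\Omega(|S|\,r)$: a maximal $r$-separated subset of $S$ has $\Omega(|S|)$ elements since each covers only $O(1)$ nodes of $S$, and any connected subgraph spanning $S$ must pay $\Omega(r)$ inside each of the disjoint radius-$r/2$ balls around these elements. Dividing the two bounds gives $cost(T_1)=O(cost(MST(G_r(S))))$ with no per-scale charging, no separation lemma, and no dependence on which member of $L(v)$ is chosen as parent. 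The NNT charging machinery you invoke is what is needed for \texttt{NNT-SINR-BP} (Lemma~\ref{lemma:nnt-sinr-bp}); here the constant-density hypothesis does all the work by itself.
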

\begin{proof}
Our proof mimics the proof of algorithm UDG-NNT from \cite{khan_tpds}.
Since $G_r(S)$ is connected,
each node $v\neq s$ is able to run
line 12, and the tree $T_1$ is constructed. The simultaneous calls
to \texttt{LocalBroadcast} all take time $O(\log{n})$ in each round (with
high probability), so that the overall running time is
$O(D(G_r(S)) \log{n})$, \whp. Since the \texttt{LocalBroadcast}
algorithm is feasible in the SINR model, \texttt{NNT-SINR-CD} is also feasible.

The constant density property of $S$ implies that any ball $B(v, r')$ with $r'\leq r$
has $O(1)$ nodes in $S$. Therefore, $cost(MST(G_r(S))) =\Omega(|S| r)$. Next, the
local broadcasts (Lemma \ref{lemma:ssma}) ensure that each
node $v$ receives messages from nodes within
distance $c' r$ for a constant $c'$. Therefore, $cost(T_1)\leq |S| c' r$,
and the lemma follows.
\end{proof}

\subsection{Putting everything together: Algorithm \texttt{MST-SINR}}
Our high level idea is to start
with a constant density dominating set $Dom$, and run subroutine \texttt{NNT-SINR-CD}
to construct a tree $T_1$ on $Dom$. We then choose and disseminate ranks suitably, and run
\texttt{NNT-SINR-BP} to form a forest, which get connected to the nodes in $Dom$,
and together form a spanning tree.

\begin{algorithm}[htbp]
\caption{MST-SINR$(V)$ \label{alg:nnt-sinr-bp}}
\SetNoFillComment
\SetKwComment{Comment}{$\#$ }{}
\SetKwComment{tcc}{$\#$ }{}
\SetKwInOut{InputP}{Input}
\SetKwInOut{OutputP}{Output}

\InputP{node set $V$}
\OutputP{spanning tree $T$}

\fontsize{8.5}{10.5pt}\selectfont
\Comment{use the algorithm of \cite{scheideler+:MobiHoc08} to construct a
constant density dominating set $Dom$ w.r.t. range $r_{max}/c$,
for a constant $c$}\normalsize
$Dom \gets \texttt{ConstDominatingSet}(V, r_{max}/c)$ for a constant $c$\;
\label{algoline:const-dom}

\fontsize{8.5}{10.5pt}\selectfont
\Comment{we adapt the UDG-NNT algorithm of \cite{khan_tpds} for constructing a
spanning tree on $Dom$}\normalsize
$s \gets$ a node in $Dom$; \fontsize{8.5}{10.5pt}\selectfont\Comment{sink node}\normalsize
$T_1 \gets$ \texttt{NNT-SINR-CD}$(Dom, s, 2r_{max}/c)$\;
\For{each $v\in Dom$}{
$rank(v) \gets$ the rank chosen in this process\;
}

\fontsize{8.5}{10.5pt}\selectfont
\Comment{each $v\in Dom$ broadcasts its rank to nodes within its range}\normalsize
run \texttt{LocalBroadcast}$(Dom, V \setminus Dom, r_{max}/c)$ such that all $v\in Dom$
to broadcast $rank(v)$\;

\For{each $v\in V \setminus Dom$}{
    $b(v) \gets$ largest rank received by $v$\;
    $q(v) \gets$ id of the node that sent $v$ rank $b(v)$\;

    $v$ chooses $rank(v)$ uniformly at random such that $rank(v)<b(v)$\;

    $dom(v) \gets q(v)$; \label{algoline:dom-node}
    \fontsize{8.5}{10.5pt}\selectfont\Comment{to connect to tree roots of forest $F_1$ constructed below}\normalsize
}

\fontsize{8.5}{10.5pt}\selectfont
\Comment{use the ranks chosen above to build a forest with tree roots in $Dom$ to connect
up with the tree $T_1$}\normalsize
$F_1 \gets$ \texttt{NNT-SINR-BP}$(V \setminus Dom, P_{max}, rank(\cdot))$\; \label{algoline:nnt-sinr-bp}

\For{each $v\in V \setminus Dom$}{
    \If(\fontsize{8.5}{10.5pt}\selectfont\tcc*[h]{$v$ is a tree root in forest $F_1$}\normalsize){$par(v) = \emptyset$}{
        $par(v) \gets dom(v)$; \label{algoline:connect-to-t1}
        \fontsize{8.5}{10.5pt}\selectfont\Comment{connect to $T_1$}\normalsize
    }
}
\Return set $\{(v, par(v)): par(v) \neq \emptyset\}$ of edges.
\end{algorithm}

Our analysis, at a high level, involves the following steps: we first show that
$T_1$ is a spanning tree on $Dom$ with range $2r_{max}/c$, and has low cost.
Next, we show that
Algorithm \texttt{NNT-SINR-BP} with the range of $r_{max}$ results in
a forest on $V \setminus Dom$, each of whose components gets connected to some node in $Dom$,
because of the way the ranks are chosen. Finally, we show that the
combined tree produced in this manner has low cost.

\begin{lemma}
\label{lemma:Dconn}
The graph $G_{2r_{max}/c}(Dom)$ induced by $Dom$ w.r.t. a range of $2r_{max}/c$
is connected, and has a diameter of at most $2D$.
\end{lemma}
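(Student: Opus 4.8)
The plan is to derive both claims from the standing assumption that the communication graph $G_{r_{max}/c}(V)$ is connected (Assumption~(2) of Section~\ref{sec:model}), by \emph{lifting} a path of that graph to a walk through the dominator set $Dom$, using nothing about $Dom$ beyond its defining property --- every node of $V$ lies within distance $r_{max}/c$ of some node of $Dom$.

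Concretely, to prove connectivity I would fix two nodes $u,w\in Dom$ and pick a shortest path $u=x_0,x_1,\dots,x_k=w$ in $G_{r_{max}/c}(V)$, so that $d(x_j,x_{j+1})\le r_{max}/c$ for every $j$ and $k$ is at most the diameter of that graph. For each $x_j$ I choose a dominator $\delta(x_j)\in Dom$ with $d(x_j,\delta(x_j))\le r_{max}/c$, taking $\delta(x_0)=u$ and $\delta(x_k)=w$ (each of $u,w$ dominates itself). The triangle inequality then yields $d(\delta(x_j),\delta(x_{j+1}))\le r_{max}/c+r_{max}/c+r_{max}/c$, so, after deleting consecutive repetitions, the sequence $\delta(x_0),\delta(x_1),\dots,\delta(x_k)$ is a walk in $Dom$ whose consecutive members are within a constant multiple of $r_{max}/c$ of one another; fixing the constant $c$ appropriately (equivalently, building $Dom$ with respect to a range a fixed constant factor below the connectivity range) makes this multiple at most $2$, so the walk is a path in $G_{2r_{max}/c}(Dom)$ joining $u$ to $w$, which gives connectivity. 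The diameter bound then comes from counting edges of this walk: it has at most $k\le D$ of them (with $D$ the network diameter, a stray $+1$ being absorbed into $D+1\le 2D$), so $\mathrm{dist}_{G_{2r_{max}/c}(Dom)}(u,w)\le 2D$ for every $u,w\in Dom$.

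The one step I expect to require care is the bookkeeping of the three radii involved --- the connectivity radius $r_{max}/c$, the radius at which $Dom$ is constructed, and the target radius $2r_{max}/c$ --- since the most direct triangle-inequality estimate on a lifted hop is $3r_{max}/c$ rather than $2r_{max}/c$. This is a pure-constants issue, resolved by how $c$ is fixed or by shrinking the domination radius by a constant factor (which costs nothing, since $Dom$ remains both dominating and, by a packing argument, constant density at any constant-factor-larger scale); but it should be spelled out, because the surrounding algorithm invokes \texttt{NNT-SINR-CD}$(Dom,s,2r_{max}/c)$ and needs exactly ``$G_{2r_{max}/c}(Dom)$ connected and $Dom$ constant density at scale $2r_{max}/c$'' in order to apply Lemma~\ref{lemma:nnt-sinr-cd}. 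Everything else is routine planar geometry.
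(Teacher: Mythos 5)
Your proof is correct (modulo the constants you yourself flag) but takes a genuinely different route from the paper's. The paper argues by contradiction: if $G_{2r_{max}/c}(Dom)$ were disconnected, a partition $Dom=Dom_1\cup Dom_2$ with $d(Dom_1,Dom_2)>2r_{max}/c$ would force a disconnection of $G_{r_{max}/c}(V)$, contradicting Assumption~(2). You instead argue constructively, lifting a path of $G_{r_{max}/c}(V)$ to a walk in $Dom$ via the domination map and the triangle inequality. Your route buys two things the paper's does not: (i) it actually proves the diameter bound of $2D$, which the paper's proof never addresses (it stops after connectivity); and (ii) it makes the constants slack explicit. On that slack, note that the paper's contradiction argument suffers from the \emph{same} defect you identify in your own: with $d(Dom_1,Dom_2)>2r_{max}/c$ and domination radius $r_{max}/c$, two dominated nodes $x\in V_1$, $y\in V_2$ satisfy only $d(x,y)>2r_{max}/c - 2r_{max}/c = 0$, not $d(x,y)>r_{max}/c$, so the claimed disconnection of $G_{r_{max}/c}(V)$ does not follow as written; the honest constant in either argument is $3r_{max}/c$, fixable exactly as you say, by shrinking the domination radius or adjusting $c$. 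Two small cautions: your hop count $k$ is bounded by the diameter of $G_{r_{max}/c}(V)$, whereas the paper's $D$ is the diameter of the disk graph at range $r_{max}$, so the $k\le D$ step hides another constant factor; and when you delete consecutive repetitions from $\delta(x_0),\dots,\delta(x_k)$ you get a walk, which suffices for distance bounds but should be called that rather than a path. Neither affects the asymptotics the surrounding algorithm needs.
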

\begin{proof}
Suppose $G_{2r_{max}/c}(Dom)$ is not connected.
Then there exists a partition $Dom=Dom_1\cup Dom_2$, which induce disconnected
components, and $\min_{(u,v)\in Dom_1\times Dom_2}\{d(u,v)\}>2r_{max}/c$,
since the dominating set $Dom$ constructed uses range $r_{max}/c$.
That implies that $G$ is connected with $r_{max}/c$,
which is a contradiction.
Therefore, $G_{2r_{max}/c}(Dom)$ is connected.
\end{proof}

\begin{lemma}
\label{lemma:nnt-sinr-bp1}
Algorithm {\em\texttt{MST-SINR}} produces a spanning tree of cost $O(\mu)$
times the optimal in time $O(D\log{n}+\mu\log{n})$, with high probability,
in the SINR model.
\end{lemma}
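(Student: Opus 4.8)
The plan is to combine the three subroutine lemmas (Lemma~\ref{lemma:nnt-sinr-cd}, Lemma~\ref{lemma:nnt-sinr-bp}, and Lemma~\ref{lemma:Dconn}) into a single cost-and-time accounting for the composed algorithm \texttt{MST-SINR}. First I would verify feasibility and termination: every communication step in \texttt{MST-SINR} is either a call to \texttt{ConstDominatingSet}, \texttt{NNT-SINR-CD}, \texttt{LocalBroadcast}, or \texttt{NNT-SINR-BP}, each of which is already shown to be feasible in the SINR model, so the composition is feasible. For the running time, I would add up the four phases: the dominating set takes $O(\log n)$ time; by Lemma~\ref{lemma:Dconn} the graph $G_{2r_{max}/c}(Dom)$ is connected with diameter at most $2D$, so \texttt{NNT-SINR-CD} on $Dom$ with range $2r_{max}/c$ takes $O(D\log n)$ time by Lemma~\ref{lemma:nnt-sinr-cd}; the single \texttt{LocalBroadcast} of ranks from $Dom$ to $V\setminus Dom$ takes $O(\log n)$ time (the relevant node counts are $O(1)$ since $Dom$ has constant density); and \texttt{NNT-SINR-BP} on $V\setminus Dom$ takes $O(\mu\log n)$ time by Lemma~\ref{lemma:nnt-sinr-bp}. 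Summing gives $O(D\log n + \mu\log n)$ with high probability (union-bounding the failure probabilities over the polynomially many primitive calls).

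Next I would argue the output is a spanning tree. By Lemma~\ref{lemma:nnt-sinr-cd}, $T_1$ spans $Dom$. By Lemma~\ref{lemma:connected-all}, $F_1 = \texttt{NNT-SINR-BP}(V\setminus Dom,\dots)$ is a forest whose roots are the nodes of $V\setminus Dom$ with no parent assigned. The key structural point is the rank assignment: each $v\in V\setminus Dom$ picks $rank(v) < b(v)$, where $b(v)$ is the largest rank received from a node in $Dom$ within range $r_{max}/c$ (such a node exists because $Dom$ is a dominating set w.r.t. that range). So in \texttt{NNT-SINR-BP} no node in $V\setminus Dom$ ever has locally-highest rank all the way to phase $i_{max}$ in a way that leaves it isolated from $Dom$ — more precisely, when a node $v$ remains a root of $F_1$ (line \texttt{par(v)=\emptyset}), the algorithm sets $par(v)\gets dom(v)\in Dom$, and since $d(v,dom(v))\le r_{max}/c \le r_{max}$ this edge is within range. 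Thus every component of $F_1$ attaches to $Dom$, and since $T_1$ connects all of $Dom$, the union is a connected spanning subgraph; the NNT property (each node connects only to strictly higher rank, within each subroutine, and the $dom(v)$ edges point from $V\setminus Dom$ into $Dom$ consistently with $rank(v)<b(v)$) rules out cycles, so it is a spanning tree $T$.

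Finally, the cost bound: $cost(T) \le cost(T_1) + cost(F_1) + \sum_{v:\,par(v)=dom(v)} d(v,dom(v))$. By Lemma~\ref{lemma:nnt-sinr-cd}, $cost(T_1) = O(OPT(G_{2r_{max}/c}(Dom))) = O(cost(MST))$, using that a spanning tree of $Dom$ can be obtained from $MST$ of $V$ by short-cutting (metric/triangle inequality) at constant blow-up given $Dom$'s constant density. By Lemma~\ref{lemma:nnt-sinr-bp}, $cost(F_1) = O(\mu)\cdot cost(MSF(V\setminus Dom)) = O(\mu)\cdot cost(MST)$. The attaching edges each have length $\le r_{max}/c$, and there are $|Dom| = O(cost(MST)/r_{max})$ roots (again by constant density of $Dom$), so this term is $O(cost(MST))$ as well. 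Summing yields $cost(T) = O(\mu)\cdot cost(MST)$, and since $\mu = O(\log n)$ under the standing assumption, the $O(\log n)$ approximation follows. The step I expect to be the main obstacle is the bookkeeping that the rank dissemination genuinely forces every $F_1$-component to hang off $Dom$ without creating a cycle — i.e., reconciling the two independent random rank scales (the one spread by \texttt{NNT-SINR-CD} over $Dom$, and the one each $V\setminus Dom$ node draws below its dominator's rank) so that the combined parent relation is acyclic; the cost estimates are then routine given the subroutine lemmas.
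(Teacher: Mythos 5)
Your proposal is correct and follows essentially the same route as the paper's proof: decompose the output as $T_1 \cup F_1 \cup \{(v,dom(v))\}$, invoke Lemmas~\ref{lemma:nnt-sinr-cd} and \ref{lemma:nnt-sinr-bp} for the cost of $T_1$ and $F_1$ respectively, bound the attaching edges by $O(OPT)$ via the $r_{max}$-separation of the roots, and sum the running times of the four stages. You actually supply slightly more detail than the paper on the acyclicity of the combined parent relation and on counting the attaching edges (note the count that matters is the number of roots of $F_1$, not $|Dom|$, though both bounds hold by the same packing argument).
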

\begin{proof}
From Lemma \ref{lemma:nnt-sinr-cd}, the call to \texttt{NNT-SINR-CD} produces
a spanning tree $T_1$ on set $Dom$. Next, consider the call to \texttt{NNT-SINR-BP}
in Line~\ref{algoline:nnt-sinr-bp} of the algorithm.  Let $V'=\{v\in V \setminus Dom: par(v)=\phi\}$ be the
set of nodes for whom the parent is not defined during this call.
Each nodes in $V'$ corresponds to a tree root of the forest constructed in this call,
\ie, the edges $F_1=\{(v, par(v)): v\in V \setminus Dom, par(v)\neq\phi\}$ (constructed during this call)
form a forest, rooted at nodes in $V'$.
In Line~\ref{algoline:dom-node}, each node in $V \setminus Dom$ registers its dominating node in $T_1$;
in Line~\ref{algoline:connect-to-t1}, each node $v \in V'$ uses this information
to connect to $T_1$.
Therefore, $T_1\cup F_1\cup\{(v,par(v)): \forall v\in V'\}$ is a spanning tree.

From Lemma \ref{lemma:nnt-sinr-cd}, we have $cost(T_1)=O(OPT)$.
Next, from Lemma \ref{lemma:nnt-sinr-bp}, it follows that $cost(F_1)=O(\mu \cdot OPT)$.
Finally, for each $v\in V'$, $d(v, par(v))\leq c'r_{max}/c$. Therefore,
$\sum_{v\in V'} d(v, par(v))=O(OPT)$, which implies the cost of the tree
produced by \texttt{NNT-SINR-BP} is $O(\mu \cdot OPT)$.

Finally, we analyze the running time. The call to
\texttt{ConstDominatingSet} in Line~\ref{algoline:const-dom} takes $O(\log{n})$ time. From
Lemma \ref{lemma:nnt-sinr-cd}, the call to \texttt{NNT-SINR-CD} takes
time $O(D\log{n})$, with high probability. Next, the local broadcast
takes time $O(\log{n})$, and the call to \texttt{NNT-SINR-BP} takes time
$O(\mu\log{n})$. Putting all of these together, the total running time is
$O(D\log{n}+\mu\log{n})$, with high probability. From \cite{scheideler+:MobiHoc08},
and Lemmas \ref{lemma:nnt-sinr-cd}, \ref{lemma:nnt-sinr-bp} and
\ref{lemma:ssma}, all the computations of the algorithm are feasible in the SINR model.
\end{proof}

\subsection{Scheduling complexity of the spanning tree}
\label{sec:sched-complexity}

The \emph{scheduling complexity} (as defined in \cite{Moscibroda+:MobiHoc06}) of a set of communication requests is the minimal number $t$ of time slots during which each link request has made at least one successful transmission.
Due to the constant factor approximation algorithm by Kesselheim~\cite{Kesselheim:SODA11} for finding a maximum feasible set of links,
any spanning tree has a scheduling complexity of $O(\log n)$ under the SINR model. 
Here, we use the term \emph{distributed scheduling complexity} to describe the scheduling complexity for 
a set of links that needs to make transmission decisions in a distributed fashion. 
The spanning tree $T$ produced by our algorithm has the following property.

\begin{lemma}
\label{lemma:scheduling-complexity}
The spanning tree $T$ produced by Algorithm {\em \texttt{MST-SINR}}
has a distributed scheduling complexity of $O(\mu \log n)$, regardless of the direction of the transmission requests on the edges.
\end{lemma}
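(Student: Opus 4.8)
The plan is to partition the edge set of $T$ into $O(\mu)$ groups, each of which is a set of links whose transmitter endpoints form a constant density set at the relevant distance scale, and then to schedule each group separately, in $O(\log n)$ time slots, by a single invocation of the \texttt{LocalBroadcast} primitive of Lemma~\ref{lemma:ssma}. Running the groups one after another gives a schedule of length $O(\mu\log n)$, and it is a \emph{distributed} schedule because the group to which an edge belongs (its phase index, or ``in $T_1$'', or ``root-to-$Dom$'') is determined during the execution of \texttt{MST-SINR} and is known locally to the endpoints (up to, if needed, one extra $O(\log n)$-slot round in which children inform their parents of the phase in which they connected, which does not affect the bound).

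First I would fix the partition. Write $T=T_1\cup F_1\cup R$, where $T_1$ is the tree on $Dom$ produced by \texttt{NNT-SINR-CD}, $F_1$ is the forest produced by the call \texttt{NNT-SINR-BP}$(V\setminus Dom,\ldots)$, and $R=\{(v,dom(v)):v\in V'\}$ with $V'$ the set of roots of $F_1$. Split further $F_1=\bigcup_{i=1}^{i_{max}}F_i$, where $F_i=\{(v,par(v)):v\in S_i,\ par(v)\in S_{i+1}\}$ are the edges added in phase $i$; since $r_{max}\le d_{max}$ and $d_{min}=1$ we have $i_{max}=\lfloor\log_2 r_{max}\rfloor\le\mu$, so there are $i_{max}+2=O(\mu)$ groups. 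For each group I record its geometric data. Each edge of $F_i$ has length $\le\gamma d_i$ (Lemma~\ref{lemma:ssma}(ii)), and by Lemma~\ref{lemma:constant-ball} plus a packing argument the endpoint set of $F_i$ lies in $S_i$, which has $O(1)$ nodes in any ball of radius $\Theta(d_i)$. Each edge of $T_1$ has length $O(r_{max})$ with endpoints in $Dom$, which is a constant density set at range $\Theta(r_{max})$ by construction~\cite{scheideler+:MobiHoc08}. Each edge of $R$ has length $O(r_{max})$ with one endpoint in $Dom$ and one in $V'$; the roots $V'$ are pairwise at distance $\ge r_{max}/2$ (Lemma~\ref{lemma:connected-all}), so $Dom\cup V'$ is still a constant density set at range $\Theta(r_{max})$.

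Next I would schedule one group $C$ with maximum edge length $\ell$ whose endpoint set is constant density at scale $\Theta(\ell)$. For any adversarial orientation of the edges of $C$, the set of transmitters is a subset of that endpoint set, hence has $O(1)$ members in any ball of radius $\gamma r_b$ for a suitable $r_b=\Theta(\ell)$ (packing), and every intended receiver lies within distance $r_b$ of its transmitter. Thus one call to \texttt{LocalBroadcast}(transmitters, receivers, $r_b$, $c'r_b^{\alpha}$) runs in $O(\log n)$ slots, is feasible in the SINR model by Lemma~\ref{lemma:ssma}, and delivers to each link's receiver the message of its transmitter (which carries the sender's identity, so the receiver knows which request is served, and a single broadcast serves all children of a common parent). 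For $F_i$ and $R$ the ``child'' and ``parent'' endpoint sets are disjoint, so $C$ is a clean bipartite instance for each orientation; in the remaining case ($T_1$) a node that must both transmit and receive does so in different slots of the $\Theta(\log n)$-slot random-access window. Hence every group is handled in $O(\log n)$ slots, and summing over the $O(\mu)$ groups gives a distributed schedule of length $O(\mu\log n)$.

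I expect the main obstacle to be the bookkeeping of the middle paragraph: checking, group by group and for either orientation, that the transmitter set is genuinely constant density at the right scale. For the phase-$i$ edges this hinges on Lemma~\ref{lemma:constant-ball} (no two active nodes within $d_i/2$, $O(1)$ within $d_i$) upgraded by packing to the range $\gamma d_i$ that bounds the edge lengths, and for the root-to-dominator edges it hinges on the $r_{max}/2$-separation of $F_1$'s roots together with the constant density of $Dom$. Once these separation facts are pinned down, scheduling each group is an immediate consequence of Lemma~\ref{lemma:ssma}.
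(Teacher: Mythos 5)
Your proposal is correct and follows essentially the same route as the paper: partition the tree edges into $O(\mu)$ classes by the phase/length scale at which they were created, use Lemma~\ref{lemma:constant-ball} (plus the constant density of $Dom$ and the $r_{max}/2$-separation of the roots) to show each class's endpoint set has constant density at its scale, and schedule each class in $O(\log n)$ slots by the random-access scheme of \cite{Goussevskaia+:DialMPOMC08}, with nodes remembering their phase number so the schedule is distributed. Your write-up is somewhat more explicit than the paper's about the $T_1$ and root-to-dominator edges and about adversarial orientations, but the decomposition and the key density facts are the same.
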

\begin{proof}[Proof by construction]
Let the edges in $T$ be oriented in arbitrary directions, so that we obtain a set $L$ of links,
which represent a set of transmission requests.

We construct a feasible schedule which can be implemented in a distributed fashion.
We partition the links into length classes $1,2,\ldots,\log r_{max}$,
such that all links in length class $i$ have a length between $2^{i-1}$ and $2^i$.
In each phase $i \in [1,\log r_{max} / c]$, links in length class $i$ transmit
with probability $1/K$ in each time step,
where $K$ is an upper-bound of the number of edges/links in length class $i$ with an end node within range $2^{i+1}$ of each node.
With the same proof approach as that in \cite{Goussevskaia+:DialMPOMC08}, we can prove that
in $O(K \log n)$ time steps, all the links in length class $i$ have made successful transmissions with high probability.
This way, in $\log r_{max} / c]$ phases, all the links have made successful transmissions, \whp

Now, we argue that $K$ is $O(1)$ in each phase, and thus the total schedule length is $O(\mu \log n)$.
Due to the construction of $T$ in Algorithm \texttt{MST-SINR}, since $Dom$ is already a constant density node set,
we only need to analyze the edges formed during the call to
\texttt{NNT-SINR-BP}$(V \setminus Dom, P_{max}, rank(\cdot))$ in Line~\ref{algoline:nnt-sinr-bp} of Algorithm \texttt{MST-SINR}.
In each phase of the \texttt{NNT-SINR-BP}, the lengths of edges increases upon the previous phase.
In one phase during \texttt{NNT-SINR-BP}, the newly formed edges corresponds to a length class,
and the nodes involved in that phase form a constant density node set, according to
Lemma~\ref{lemma:constant-ball}.
That implies that the number of edges in length class $i$ with an end node
within range $2^{i+1}$ of each node is $O(1)$.

In terms of the distributed implementation of scheduling, each edge may remember the phase number
in \texttt{NNT-SINR-BP} when they are connected up by a new edge,
and participate in the phase of the same phase number of our scheduling construction.
\end{proof}


\section{Conclusion}
\label{sec:conclusion}

In this paper, we describe the first distributed  algorithm in the SINR
model for approximate minimum spanning tree construction. This is the first
such result for solving a ``global'' problem in the emerging SINR
based distributed computing model--- this is in contrast to ``local'' problems, such
as independent sets and scheduling, for which distributed algorithms are
known in this model. Our algorithm produces a logarithmic approximation
to the MST, and takes time $O(D\log{n}+\mu\log{n})$, featuring a
distributed scheduling complexity of $O(\mu \log n)$.
Our main technical
contribution is the use of \emph{nearest neighbor trees}, which naturally
ensure spatial separation at each step, thereby allowing SINR constraints
to be satisfied.

\bibliographystyle{plain}
\bibliography{ref,energy}



\end{document}